\newtheorem{theorem}{Theorem}
\newtheorem{lemma}[theorem]{Lemma}
\newtheorem{proposition}[theorem]{Proposition}
\crefname{inequality}{ineq.}{ineqs.}
\crefname{definition}{}{}
\begin{document}

\title[A regional belief propagation algorithm]{A belief propagation algorithm based on domain decomposition} 
\date{\today}

\author{Brad~Lackey}
\affiliation{Joint Center for Quantum Information and Computer Science, University of Maryland, College Park}
\affiliation{Institute for Advanced Computer Studies, University of Maryland College Park}
\affiliation{Departments of Computer Science and Mathematics, University of Maryland, College Park}
\affiliation{The Cryptography Office, Mathematics Research Group, National Security Agency}

\keywords{belief propagation, free energy, quantum annealing, Bayesian reestimation, LDPC codes.}

\begin{abstract}
This note provides a detailed description and derivation of the domain decomposition algorithm that appears in previous works by the author. Given a large re-estimation problem, domain decomposition provides an iterative method for assembling Boltzmann distributions associated to small subproblems into an approximation of the Bayesian posterior of the whole problem. The algorithm is amenable to using Boltzmann sampling to approximate these Boltzmann distributions. In previous work, we have shown the capability of heuristic versions of this algorithm to solve LDPC decoding and circuit fault diagnosis problems too large to fit on quantum annealing hardware used for sampling. Here, we rigorously prove soundness of the method.
\end{abstract}

\maketitle

\section{Introduction}

In our previous work \cite{bian2014discrete, bian2016mapping}, we explored methods to decompose constrained optimization problems into subproblems that could be solved on quantum annealing hardware, where the individual solutions can then be reassembled into a solution of the whole problem. We have used the term \emph{domain decomposition} to represent any of a number of heuristic belief propagation-like algorithms to do this. In \cite{bian2014discrete}, we created a variant of the min-sum algorithm to solve LDPC decoding problems with more variables than qubits available on the available hardware. In \cite{bian2016mapping}, we created a sum-product style algorithm that solved fault diagnosis problems too large to fit on the hardware directly.

While ultimately successful in solving these problems, the versions of the domain decomposition method used in those works was heuristic: there was no guarantee that the object formed by reassembling the solutions on each subproblem was in any way related to the original problem. The goal of this note is provide a mathematical proof of the soundness of a specific domain decomposition algorithm (Algorithm \ref{algorithm}) formed by balancing ``free energy flow'' between subproblems. In \S{2} we review the well known relationship between the Boltzmann distribution, Helmholtz free energy, and belief propagation algorithm. In \S{3} we develop the notion of a regional approximate free energy analogous to the Bethe free energy. In \S{4} we examine the free energy of a single region and show that when the free energy between regions is balanced, a critical point for the regional approximate free energy is obtained (Theorem \ref{theorem:reconstitute}). We then given an explicit statement for the domain decomposition algorithm (Algorithm \ref{algorithm}) and prove that a stationary point for this algorithm is such a critical point (Theorem \ref{theorem:soundness}).

Like belief propagation there is no guarantee that the domain decomposition algorithm converges. In the case of belief propagation failure to converge is often attributed to a lack of convexity of the Bethe approximate free energy stemming from loops in the factor graph. Several sufficient conditions for convergence have been based on this observation \cite{tatikonda2002loopy, heskes2003stable, heskes2004uniqueness, mooij2007sufficient, watanabe2009graph, watanabe2011uniqueness}, and other forms of generalized belief propagation algorithms have been developed from convex bounds or treating convex and concave regions separately \cite{heskes2002approximate, teh2002unified, yuille2002cccp, heskes2003approximate, globerson2007approximate}. While the algorithm we propose here is a significant departure from these methods, it is possible that one or more of these techniques could apply. We leave such explorations for future work.

Algorithm \ref{algorithm} relies on obtaining marginals of Boltzmann distributions along certain ``boundary'' variables. While the intent was to empirically estimate these using a dedicated device that produces Boltzmann samples, these marginals could also be computed by any number of other means such as simulated annealing or even belief propagation. It would be interesting to see if domain decomposition using---say---simulated annealing to compute these marginals could outperform belief propagation by carefully designing regions that internalize many of the loops in the factor graph, and so reduce the nonconvexity of the free energy approximation.

\section{Free energy and belief propagation}

Belief propagation \cite{pearl2014probabilistic} is an iterative method to approximate posterior marginal distributions of a constrained optimization problem given observations. The most common version of belief propagation, the sum-product algorithm, can derived in a straightforward manner by carefully tracking conditional probabilities and making suitable independence assumptions; for a concise presentation see \cite{kschischang2001factor}. Such algorithms, including various variants of belief propagation, can be viewed as a type of generalized distributive law on semirings \cite{aji2000generalized}.

Our starting point is the variational approach to Bayesian re-estimation, \cite{yedidia2000generalized, aji2001generalized, yedidia2001bethe, pakzad2002belief, yedidia2003understanding, yedidia2005constructing}: the Bayesian posterior distribution is a critical point of a Helmholtz free energy constructed from the given problem. Specifically, given a space $\Omega$ of configurations, let us write our target probability as a Gibbs state based on the product of problem dependent factors
\begin{equation}\label{equation:ansatz}
p_0(x) = \frac{1}{Z} \prod_{\alpha=1}^m f_\alpha(x) = \frac{1}{Z(T)} e^{-\sum_{\alpha=1}^m E_\alpha(x)/kT},
\end{equation}
where each factor term $f_\alpha(x) \propto e^{-E_\alpha(x)/kT}$ contributes an energy $E_\alpha(x)$ to the total energy $E(x) = \sum_{\alpha=1}^m E_\alpha(x)$. The Helmholtz free energy of an ensemble with probability distribution $p$ is
$$A(p) = U(p) - TH(p) = \sum_{x\in\Omega} p(x)E(x) + kT\sum_{x\in\Omega} p(x)\log p(x).$$
From (\ref{equation:ansatz}) above one can write
$$E(x) = \sum_{\alpha=1}^m E_\alpha(x) = -kT(\log p_0(x) + \log Z(T)).$$
Inserting this into the above equation,
\begin{eqnarray*}
A(p) &=& \sum_{x\in\Omega} \left( -kT p(x) \log p_0(x) + kT p(x) \log p(x) \right) - kT \log Z(T)\\
&=& -kT\log Z(T) + kT\sum_x p(x)\log\left(\frac{p(x)}{p_0(x)}\right)\\
&=& -kT\log Z(T) + kTD_{KL}(p \|p_0).
\end{eqnarray*}
The Kullback-Leibler divergence has $D_{KL}(p \|p_0) \geq 0$, with equality if and only $p = p_0$, and therefore the posterior is the global minimum of this Helmholtz free energy.

Decomposing the Helmholtz free energy as
$$A(p) = U(p) - TH(p) = \sum_{x\in\Omega}\sum_{\alpha=1}^m p(x)E_\alpha(x) + kT\sum_{x\in\Omega} p(x)\log p(x),$$
one sees each factor contributes an energy term to the internal energy. Let us assume that the energies are ``local.'' We do not want to be too formal about this; simply, we assume the configurations in our space all have the form $x = (x_1, \dots, x_n)$ and that each $E_\alpha$ does not depend on all the variables $x_1,\dots,x_n$, but rather only a few of them. Let us introduce the notation
\begin{itemize}
\item $j \prec \alpha$ to mean $E_\alpha$ (through $f_\alpha$) depends nontrivially on $x_j$,
\item $x_\alpha = (x_j\::\: j \prec \alpha)$ for the support of $E_\alpha$, and
\item $\Omega_\alpha$ for the domain of the vector $x_\alpha$.
\end{itemize}
The marginal distribution with respect to one of these supports is $b_\alpha(x_\alpha) = \sum_{x\setminus x_\alpha} p(x)$ (the notation $b_\alpha$ refers to the ``belief'' and hence ``belief propagation''). The internal energy then has the form
$$\sum_{x\in\Omega} p(x) E_\alpha(x) = \sum_{x_\alpha\in \Omega_\alpha} b_\alpha(x_\alpha) E_\alpha(x_\alpha).$$
That is, the internal energy is linear in the sense that
$$U(p) = \sum_{x\in\Omega}\sum_{\alpha=1}^m p(x)E_\alpha(x) = \sum_{\alpha=1}^m \sum_{x_\alpha\in \Omega_\alpha} b_\alpha(x_\alpha) E_\alpha(x_\alpha) = \sum_{\alpha=1}^m U_\alpha(b_\alpha).$$

At this point it is natural to disassociate the beliefs $b_\alpha$ with the marginals of $p$ and pose a local free energy associated to each factor (for which we will reuse the name $b_\alpha$ for its argument)
$$A_\alpha(b_\alpha) = U_\alpha(b_\alpha) - TH(b_\alpha)  = \sum_{x_\alpha} b_\alpha(x_\alpha) E_\alpha(x_\alpha) + kT\sum_{x_\alpha} b_\alpha(x_\alpha)\log b_\alpha(x_\alpha).$$
The main problem lies in ``localizing'' the entropy term. The sum of the factors' local free energies does not recover the Helmholtz free energy partially because of the nonlinearity inherent in the entropy, but mostly because one has grossly over-counted entropy contributions from factors sharing common variables. To illustrate this by a simple example, suppose $\Omega = \Omega_1\times\Omega_2\times\Omega_3$ and $p(x) = p(x_1,x_2,x_3)$ is uniform; form the marginals
$$b_1(x_2,x_3) = \sum_{x_1\in\Omega_1} p(x_1,x_2,x_3) \text{ and } b_2(x_1,x_3) = \sum_{x_2\in\Omega_2} p(x_1,x_2,x_3),$$
which are also uniform. Then
$$H(p) = k\log |\Omega| = k(\log |\Omega_1| + \log |\Omega_2| + \log |\Omega_3|).$$
and
$$H(b_1) = k(\log |\Omega_2| + \log |\Omega_3|) \text{ and } H(b_2) = k(\log |\Omega_1| + \log |\Omega_3|).$$
Therefore $H(b_1) + H(b_2)$ over-counts the entropy by $k\log |\Omega_3|$ simply because $x_3$ in the support of each belief.

The Bethe approximation overcomes this failure by correcting the entropy count at each variable. For each variable $x_j$ let us write $C_j = \#\{ \alpha : j \prec \alpha \}$, the number of factors involving this variable. Removing extra entropy through over counting gives
\begin{eqnarray}\nonumber
A_\text{Bethe} &=& \sum_{\alpha=1}^m\left[\,\sum_{x_\alpha\in\Omega_\alpha} b_\alpha(x_\alpha)E_\alpha(x_\alpha) + kT\sum_{x_\alpha\in\Omega_\alpha} b_\alpha(x_\alpha)\log b_\alpha(x_\alpha)\,\right]\\\label{eqn:Bethe}
&& \qquad +\ kT \sum_{j=1}^n \left((1-C_j)\cdot\sum_{x_j\in\Omega_j} b_j(x_j) \log b_j(x_j)\right).
\end{eqnarray}
There is some freedom to use different weights to correct the entropy contributions, which leads to variants of the sum-product algorithm \cite{wiegerinck2003fractional, weller2015bethe}. We will only consider the Bethe approximation as given above.

Here, the ``ensemble'' of the Bethe approximate free energy is a collection of beliefs $\{b_\alpha, b_j\}$, one for each factor and variable. The target posterior distribution minimizes the Helmholtz free energy, so it is not unreasonable to attempt to approximate this posterior with the minimum of the Bethe approximation. However, the minimum of the Helmholtz free energy is taken over global probability distributions, while the Bethe approximation is a function of disjoint beliefs. To rectify this, one adds consistency constraints on the beliefs so to make them marginals of a single distribution:
\begin{itemize}
\item $\sum_{x_\alpha} b_\alpha(x_\alpha) = 1$ and $\sum_{x_j} b_j(x_j) = 1$, and
\item whenever $j \prec \alpha$ we require $\sum_{x_\alpha\setminus x_j} b_\alpha(x_\alpha) = b_j(x_j)$.
\end{itemize}
Enforcing these conditions with Lagrange multipliers produces a constrained Bethe approximate free energy:
\begin{eqnarray*}
\tilde{A}_\text{Bethe} &=& \sum_{\alpha=1}^m\left[\,\sum_{x_\alpha\in\Omega_\alpha} b_\alpha(x_\alpha)E_\alpha(x_\alpha) + kT\sum_{x_\alpha\in\Omega_\alpha} b_\alpha(x_\alpha)\log b_\alpha(x_\alpha)\,\right]\\
&& \quad +\ kT \sum_{j=1}^n \left((1-C_j)\cdot\sum_{x_j\in\Omega_j} b_j(x_j) \log b_j(x_j)\right) \\
&& \quad +\  \sum_{\alpha=1}^m \lambda_\alpha \left(\sum_{x_\alpha\in\Omega_\alpha} b_\alpha(x_\alpha) - 1\right)
+ \sum_{j=1}^n \lambda_j \left(\sum_{x_j\in\Omega_j} b_j(x_j) - 1\right)\\
&& \quad +\  \sum_{\alpha=1}^m \sum_{j\prec\alpha} \sum_{x_j\in\Omega_j}\lambda_{j\alpha}(x_j) \left(\sum_{x_\alpha \setminus x_j} b_\alpha(x_\alpha) - b_j(x_j)\right).
\end{eqnarray*}

We find relations at interior critical points of $\tilde{A}_\text{Bethe}$ by setting various derivatives to zero. The derivatives with respect to the multipliers simply recover the constraints when set to zero. The two nontrivial types of derivatives are
\begin{eqnarray*}
\frac{\partial \tilde{A}_\text{Bethe}}{\partial b_\alpha(x_\alpha)} &=&  E_\alpha(x_\alpha) +kT(\log b_\alpha(x_\alpha) + 1) + \lambda_\alpha + \sum_{j\prec\alpha} \lambda_{j\alpha}(x_j), \text{ and}\\
\frac{\partial \tilde{A}_\text{Bethe}}{\partial b_j(x_j)}  &=& kT(1-C_j) (\log b_j(x_j) +1) + \lambda_j - \sum_{\alpha \succ j} \lambda_{j\alpha}(x_j).
\end{eqnarray*}
Setting the first of these to zero produces the equation
$$b_\alpha(x_\alpha)= e^{-(1+\lambda_\alpha/kT)} e^{-E_\alpha(x_\alpha)/kT} \prod_{j\prec\alpha}e^{-\lambda_{j\alpha}(x_j)/kT}.$$
In particular, at a critical point the multiplier $\lambda_\alpha$ can be selected, and is completely determined by, the normalization constraint $\sum_{x_\alpha} b_\alpha(x_\alpha) = 1$. So, we are free to work with $b_\alpha$ unnormalized, for which we have
\begin{equation}\label{equation:BPfactor}
b_\alpha(x_\alpha) \propto f_\alpha(x_\alpha) \cdot \prod_{j\prec\alpha}e^{-\lambda_{j\alpha}(x_j)/kT}
\end{equation}
at any interior critical point. Similarly, the second type of derivative above produces the relation
\begin{equation}\label{equation:BPvariable}
b_j(x_j) \propto \prod_{\alpha \succ j} e^{-\lambda_{j\alpha}(x_j)/kT(C_j-1)}
\end{equation}
at an interior critical point, where now $\lambda_j$ is selected, and determined by, the normalization of $b_j$.

The remaining constraints are the consistency of the marginals. Computing the marginal $\sum_{x_\alpha\setminus x_j} b_\alpha(x_\alpha) = b_j(x_j)$ using (\ref{equation:BPfactor}) one finds the additional relations on each $b_j$:
$$b_j(x_j) \propto e^{-\lambda_{j\alpha}(x_j)/kT}\cdot \sum_{x_\alpha\setminus x_j} f_\alpha(x_\alpha) \prod_{r\prec\alpha\setminus j} e^{-\lambda_{r\alpha}(x_j)/kT}.$$
We define the function
$$M_{\alpha \to j}(x_j) \propto \sum_{x_\alpha\setminus x_j} f_\alpha(x_\alpha) \prod_{r\prec\alpha\setminus j} e^{-\lambda_{r\alpha}(x_j)/kT},$$
with the condition $\sum_{x_j} M_{\alpha \to j}(x_j) = 1$. One can simplify the above relation to
\begin{equation}\label{equation:BPinter}
b_j(x_j) \propto e^{-\lambda_{j\alpha}(x_j)/kT} \cdot M_{\alpha \to j}(x_j).
\end{equation}

Note the trivial equation 
$$\prod_{\beta \succ j\::\: \beta \not= \alpha} b_j(x_j) = b_j(x_j)^{C_j-1},$$
which is just a restatement of the definition of $C_j$. Now, we evaluate the left side of this equation using (\ref{equation:BPinter}) and the right side using (\ref{equation:BPvariable}). This results in the relation
$$\prod_{\beta \succ j\::\: \beta \not= \alpha} e^{-\lambda_{j\beta}(x_j)/kT} \cdot M_{\beta \to j}(x_j) \propto \prod_{\beta \succ j} e^{-\lambda_{j\beta}(x_j)/kT}.$$
Canceling common terms from both side leaves
\begin{equation}\label{equation:BPrel}
e^{-\lambda_{j\alpha}(x_j)/kT} \propto \prod_{\beta \succ j\::\: \beta \not= \alpha} M_{\beta \to j}(x_j).
\end{equation}
Now we define the function
$$M_{j\to \alpha}(x_j) \propto  e^{-\lambda_{j\alpha}(x_j)/kT},$$
where as before we require $M_{j\to \alpha}(x_j)$ to be a probability distribution.

\begin{proposition}
Suppose that for each variable $x_j$ and factor $f_\alpha$ with $j \prec \alpha$ one has two probability distributions, $M_{\alpha \to j}(x_j)$ and $M_{j \to \alpha}(x_j)$, which jointly satisfy
\begin{eqnarray*}
M_{\alpha \to j}(x_j) &\propto& \sum_{x_\alpha\setminus x_j} f_\alpha(x_\alpha) \prod_{k\prec\alpha\setminus j} M_{k \to \alpha}(x_k), \text{ and}\\
M_{j\to \alpha}(x_j) &\propto& \prod_{\beta \succ j\::\: \beta \not= \alpha} M_{\beta \to j}(x_j).
\end{eqnarray*}
Then the probability distributions $b_\alpha(x_\alpha)$ and $b_j(x_j)$ satisfying
\begin{eqnarray*}
b_\alpha(x_\alpha) &\propto& f_\alpha(x_\alpha) \cdot \prod_{j \prec \alpha} M_{j \to \alpha}(x_j), \text{ and}\\
b_j(x_j) &\propto& \prod_{\alpha \succ j} M_{\alpha\to j}(x_j),
\end{eqnarray*}
are critical points of the Bethe approximate free energy with 
$$\sum_{x_\alpha\setminus x_j} b_\alpha(x_\alpha) = b_j(x_j), \text{ whenever $j \prec \alpha$.}$$
\end{proposition}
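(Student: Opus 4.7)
The plan is to reverse the calculation already carried out in the preceding derivation. The constrained Bethe functional $\tilde{A}_\text{Bethe}$ has two families of Lagrange multipliers: scalar multipliers $\lambda_\alpha,\lambda_j$ enforcing normalization, and edge multipliers $\lambda_{j\alpha}(x_j)$ enforcing marginal consistency. The critical-point equations
$$b_\alpha(x_\alpha)\propto f_\alpha(x_\alpha)\prod_{j\prec\alpha}e^{-\lambda_{j\alpha}(x_j)/kT}, \qquad b_j(x_j)\propto\prod_{\alpha\succ j}e^{-\lambda_{j\alpha}(x_j)/kT(C_j-1)}$$
appear as (\ref{equation:BPfactor}) and (\ref{equation:BPvariable}) above. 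So the natural strategy is: first choose multipliers realising the given messages, then check that the hypothesized $b_\alpha$, $b_j$ agree with these critical-point forms and satisfy the edge consistency constraints.

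First I would define $\lambda_{j\alpha}(x_j):=-kT\log M_{j\to\alpha}(x_j)$, and select $\lambda_\alpha$ and $\lambda_j$ so that the normalization constraints $\sum b_\alpha = 1$ and $\sum b_j = 1$ are met (as remarked in the derivation, these scalar multipliers are determined uniquely by normalization). With these choices, the formula for $b_\alpha$ in the hypothesis is literally the form (\ref{equation:BPfactor}), so the factor-type derivative $\partial\tilde{A}_\text{Bethe}/\partial b_\alpha(x_\alpha)$ vanishes by construction.

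Next I would check the variable-type derivative. Setting $\partial\tilde{A}_\text{Bethe}/\partial b_j(x_j)=0$ and using the chosen $\lambda_{j\alpha}$ produces $b_j(x_j)\propto \prod_{\alpha\succ j}M_{j\to\alpha}(x_j)^{1/(C_j-1)}$. Substituting the second hypothesis $M_{j\to\alpha}\propto\prod_{\beta\succ j,\beta\neq\alpha}M_{\beta\to j}$ into this product, each individual message $M_{\beta\to j}$ is picked up once for every $\alpha\neq\beta$ with $\alpha\succ j$, i.e.\ exactly $C_j-1$ times. Taking the $(C_j-1)$-th root therefore recovers $\prod_{\beta\succ j}M_{\beta\to j}(x_j)$, which is the hypothesized $b_j$ up to normalization, so (\ref{equation:BPvariable}) holds.

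Finally I would verify the edge consistency $\sum_{x_\alpha\setminus x_j}b_\alpha(x_\alpha)=b_j(x_j)$. Factoring $M_{j\to\alpha}(x_j)$ out of the product in $b_\alpha$, the remaining summation $\sum_{x_\alpha\setminus x_j}f_\alpha(x_\alpha)\prod_{k\prec\alpha\setminus j}M_{k\to\alpha}(x_k)$ is proportional to $M_{\alpha\to j}(x_j)$ by the first hypothesis; applying the second hypothesis then gives $M_{j\to\alpha}(x_j)M_{\alpha\to j}(x_j)\propto\prod_{\beta\succ j}M_{\beta\to j}(x_j)$, which is $b_j(x_j)$ up to normalization. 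Since both sides are probability distributions, the proportionality is an equality. The only delicate step is the exponent bookkeeping in the variable derivative; once one sees that the double product produces each $M_{\beta\to j}$ exactly $C_j-1$ times, everything else is direct substitution.
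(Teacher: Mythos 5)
Your proposal is correct and is essentially the paper's own argument run in reverse: the paper presents the forward derivation (critical point $\Rightarrow$ message equations) and states the proposition as its converse without a separate proof block, and your choice $\lambda_{j\alpha}(x_j)=-kT\log M_{j\to\alpha}(x_j)$ together with the $(C_j-1)$-fold counting in the double product and the marginalization check $\sum_{x_\alpha\setminus x_j}b_\alpha\propto M_{j\to\alpha}M_{\alpha\to j}\propto b_j$ is exactly what that reversal requires. The only implicit assumptions (shared with the paper) are strict positivity of the messages so the multipliers are defined, and the degenerate case $C_j=1$, where the empty product makes $M_{j\to\alpha}$ uniform and the variable equation vacuous.
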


This result hands us the sum-product algorithm. We initialize distributions $M^{(0)}_{\alpha \to j}(x_j)$ and $M^{(0)}_{j \to \alpha}(x_j)$ in a reasonable way (which will be problem dependent) and iteratively redefine these as in the proposition:
\begin{eqnarray*}
M^{(t+1)}_{\alpha \to j}(x_j) &\propto& \sum_{x_\alpha\setminus x_j} f_\alpha(x_\alpha) \prod_{k\prec\alpha\setminus j} M^{(t)}_{k \to \alpha}(x_k), \text{ and}\\
M^{(t+1)}_{j\to \alpha}(x_j) &\propto& \prod_{\beta \succ j\::\: \beta \not= \alpha} M^{(t+1)}_{\beta \to j}(x_j).
\end{eqnarray*}
The proposition shows that if this converges to a stationary point, that point is a critical point of the Bethe approximate free energy and so may be taken as an approximation of the Bayesian posterior.

In practice, one has a class of factors that depend only a single variable $f_\alpha = f_\alpha(x_j)$; these typically arise as prior probabilities in a re-estimation problem. Let us write $\alpha = \{j\}$ in this case. Since $\{j\}\setminus j$ is vacuous, we have
$$M_{\{j\}\to j}^{(t)}(x_j) \propto f_{\{j\}}(x_j)$$
for all $t$ and hence there is no need to compute these. One can then simplify
$$M_{j\to\alpha}^{(t+1)} \propto f_{\{j\}}(x_j) \cdot \prod_{\beta\succ j\::\:\beta\not=\alpha,\{j\}}M^{(t+1)}_{\beta \to j}(x_j).$$
Since $j\not\prec \{k\}$ for $j \not= k$, one can also eliminate all factors arising from priors and use this last rule as the update rule at such nodes.

In most problems, one is primarily interested in the marginals $b_j(x_j)$, as these are the posterior probabilities one wishes to compute by re-estimation. From (\ref{equation:BPvariable}) these satisfy
$$b_j(x_j) \propto \prod_{\alpha\succ j} e^{-\lambda_{j\alpha}(x_j)/kT(C_j-1)} \propto e^{-(\sum_{\alpha\succ j}(\lambda_{j\alpha}(x_j)/(C_j-1)))/kT}.$$
That is, $b_j$ is the Gibbs state associated to the energy spectrum given by normalized Lagrange multipliers $\lambda_{j\alpha}(x_j)/(C_j-1)$. In particular, the posterior marginal is itself the minimum of the Helmholtz free energy
$$A_j(b_j) = \sum_{\alpha\succ j} \sum_{x_j\in\Omega_j} b_j(x_j)\lambda_{j\alpha}(x_j)/(C_j-1) + kT\sum_{x_j\in\Omega_j}b_j(x_j)\log b_j(x_j).$$

\section{Regional approximate free energies}

The presentation of the Bethe approximate free energy of the previous section was a purposefully unusual. Let us return to this construction, but rather than focus on the energies of single factors let us construct ``regions'' containing multiple factors. We introduce the notation $f_R(x) = \prod_{\alpha \in R} f_\alpha(x)$ for a region of factors $R$. Without loss of generality, we will assume that to every variable $x_j$ there is a factor $f_j(x_j)$. As noted at the end of the previous section, we can treat priors differently and so not include them in any of the regions. To be precise, all the factors that are not priors are partitioned into regions and the prior factors are treated individually. We hasten to indicate that unlike the usual approach to regional belief propagation algorithms \cite{yedidia2000generalized, aji2001generalized, pakzad2002belief}, we do not use a hierarchy of regions and the Kukuchi free energy approximation. Here we simply organize our factors in a different way and use the Bethe approximation above.

We extend the notation to $x_R$, meaning the collection of variables for which $f_R$ depends nontrivially; we also write $x_j \in x_R$ or $j \prec R$ when $x_j$ is one of these variables. Each region has internal energy
$$U_R(b_R) = \sum_{x_R} b_R(x_R)E_R(x_R),$$
where we have defined $E_R(x) = \sum_{\alpha \in R} E_\alpha(x)$. Similarly, if $E_j(x_j)$ is the energy associated to the prior $f_j(x_j)$, the internal energy at the variable $x_j$ is
$$U_j(b_j) = \sum_{x_j} b_j(x_j)E_j(x_j).$$
When $b_R$ and $b_j$ are obtained by marginalizing a global probability $p$, the internal energy of the whole system is
$$U(p) = \sum_R U_R(b_R) + \sum_j U_j(b_j).$$

Exactly at in the Bethe method, define the weight
$$C_j = \#\{R \::\: R \succ j\},$$
and the constrained regional approximate free energy:
\begin{eqnarray*}
\tilde{A}_\text{regional} &=& \sum_R\left(\sum_{x_R} b_R(x_R) E_R(x_R) + kT \sum_{x_R} b_R(x_R)\log b_R(x_R)\right)\\
&&\quad +\ \sum_j\left(\sum_{x_j} b_j(x_j) E_j(x_j) - kT(C_j-1) \sum_{x_j} b_j(x_j)\log b_j(x_j)\right)\\
&&\quad +\ \sum_R \lambda_R \left(\sum_{x_R} b_R(x_R) - 1\right) + \sum_j \lambda_j \left(\sum_{x_j} b_j(x_j) - 1 \right)\\
&&\quad +\ \sum_R \sum_{j\prec R} \sum_{x_j} \lambda_{jR}(x_j) \left( \sum_{x_R\setminus x_j} b_R(x_R) - b_j(x_j) \right).
\end{eqnarray*}

By a similar argument, setting the derivative with respect to $b_R(x_R)$ to zero yields
\begin{equation}\label{equation:GBPfactor}
b_R(x_R) \propto f_R(x_R) \cdot \prod_{j\prec R}e^{-\lambda_{jR}(x_j)/kT}.
\end{equation}

One finds the analysis of the derivative with respect to $b_j(x_j)$ splits into two cases. First if $C_j = 1$ (when the variable is internal to one region) one obtains
$$\frac{\partial\tilde{A}_\text{regional}}{\partial b_j(x_j)} = E_j(x_j) + \lambda_j - \lambda_{jR}(x_j),$$
where $R$ is the region that contains $j$. By setting this to zero and exponentiating, we find
$$e^{-\lambda_{jR}(x_j)/kT} \propto f_j(x_j).$$
However for variables contained in multiple regions (which we call \emph{boundary} variables) one finds
$$\frac{\partial \tilde{A}_\text{regional}}{\partial b_j(x_j)} = E_j(x_j) - (C_j-1)kT (\log b_j(x_j) + 1) +  \lambda_j - \lambda_{jR}(x_j).$$
Setting this derivative to zero produces the relation
\begin{equation}\label{equation:GBPvariable}
b_j(x_j) \propto (f_j(x_j))^{-1/(C_j-1)}\cdot \prod_{R\succ j} e^{-\lambda_{jR}(x_j)/kT(C_j-1)}.
\end{equation}

The variables forming the support of a region $R$ divide into boundary variables (denoted $\partial R$) and the remainder which we call interior (denoted $R^\circ$). Computing the marginal of the belief in (\ref{equation:GBPfactor}) at a boundary variable $j \in \partial R$ yields the relation
$$b_j(x_j) \propto e^{-\lambda_{jR}(x_j)/kT} \sum_{x_R\setminus x_j} f_{R}(x_R)
\prod_{r \prec R\setminus j} e^{-\lambda_{rR}(x_r)/kT}.$$
Note that in this product, $r \prec R\setminus j$ runs over all variables. For variables in $R^\circ$ we know $e^{-\lambda_{rR}(x_r)/kT} = f_r(x_r)$. Based on this we define the \emph{augmented} regional factor as the usual factor for that region times the priors for all its internal variables:
$$\tilde{f}_R(x_R) = f_R(x_R)\cdot \prod_{r \prec R^\circ} f_r(x_R).$$
As $j \prec \partial R$,
$$b_j(x_j) \propto e^{-\lambda_{jR}(x_j)/kT} \sum_{x_R\setminus x_j} \tilde{f}_{R}(x_R)
\prod_{r \prec \partial R\setminus j} e^{-\lambda_{rR}(x_r)/kT}.$$
This leads us to define
$$M_{R\to j}(x_j) = \sum_{x_R\setminus x_j} \tilde{f}_{R}(x_R) \prod_{r \prec \partial R\setminus j} e^{-\lambda_{rR}(x_r)/kT}$$
so that
\begin{equation}\label{equation:GBPinter}
b_j(x_j) \propto e^{-\lambda_{jR}(x_j)/kT} M_{R\to j}(x_j).
\end{equation}

Except for the fact that one deals only with messages to and from boundary variables, the arguments of the previous section apply. For a fixed region $R$ one has
$$\prod_{S \succ j \::\: S \not= R} b_j(x_j) = b_j(x_j)^{C_j-1}.$$
Evaluating the left and right sides with equations (\ref{equation:GBPinter}) and (\ref{equation:GBPvariable}) respectively gives the relation
$$\prod_{S\succ j \::\: S \not= R} e^{-\lambda_{jS}(x_j)/kT} M_{S\to j}(x_j) \propto
f_j(x_j)^{-1} \cdot \prod_{S\succ j} e^{-\lambda_{jS}(x_j)/kT}.$$
Canceling common terms produces the analogous result:
$$M_{j \to R}(x_j) \propto e^{-\lambda_{jR}(x_j)/kT} \propto f_j(x_j)\cdot \prod_{S\succ j \::\: S \not= R} M_{R\to j}(x_j).$$

\begin{proposition}\label{proposition:regional-BP}
Suppose that for each region $R$ and variable $x_j$ with $j \prec \partial R$ one has two probability distributions, $M_{R \to j}(x_j)$ and $M_{j \to R}(x_j)$, which jointly satisfy
\begin{eqnarray*}
M_{R \to j}(x_j) &\propto& \sum_{x_R\setminus x_j} \tilde{f}_{R}(x_R) \prod_{r\prec \partial R\setminus j} M_{r \to R}(x_r), \text{ and}\\
M_{j\to R}(x_j) &\propto& f_j(x_j)\cdot\prod_{S \succ j\::\: S \not= R} M_{S \to j}(x_j).
\end{eqnarray*}
Then the probability distributions $b_R(x_R)$ and $b_j(x_j)$ satisfying
\begin{eqnarray*}
b_R(x_R) &\propto& \tilde{f}_R(x_R) \cdot \prod_{j \prec \partial R} M_{j \to R}(x_j), \text{ and}\\
b_j(x_j) &\propto& f_j(x_j)\cdot \prod_{R \succ j} M_{R\to j}(x_j),
\end{eqnarray*}
are critical points of the regional approximate free energy with 
$$\sum_{x_R\setminus x_j} b_R(x_R) = b_j(x_j), \text{ whenever $j \prec \partial R$.}$$
\end{proposition}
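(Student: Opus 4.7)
The plan is to identify the messages as (exponentiated) Lagrange multipliers and verify that the beliefs stated in the proposition are precisely the critical-point expressions (\ref{equation:GBPfactor}) and (\ref{equation:GBPvariable}) already derived in the preceding analysis, together with the marginal-consistency constraints. Since the scalar multipliers $\lambda_R$ and $\lambda_j$ enter only through normalization, they can be absorbed into the proportionality constants of the beliefs, and the only remaining content is to check that the multipliers $\lambda_{jR}(x_j)$ induced by the messages are consistent across all regions meeting at $j$.

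First I would set $e^{-\lambda_{jR}(x_j)/kT} \propto M_{j\to R}(x_j)$ for each boundary pair $(j,R)$ with $j\prec\partial R$, and $e^{-\lambda_{jR}(x_j)/kT} \propto f_j(x_j)$ whenever $j\prec R^\circ$; both assignments come directly from the critical-point conditions derived just before the proposition. Splitting the product in (\ref{equation:GBPfactor}) as $\prod_{j\prec R} = \prod_{j\prec R^\circ}\prod_{j\prec\partial R}$ and using $\tilde{f}_R(x_R) = f_R(x_R)\prod_{r\prec R^\circ}f_r(x_r)$ immediately recovers $b_R \propto \tilde{f}_R\prod_{j\prec\partial R} M_{j\to R}$, which is the proposition's formula for $b_R$.

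Next, for a boundary variable $j$, multiplying the hypothesized fixed point $M_{j\to R} \propto f_j\prod_{S\succ j,\,S\neq R}M_{S\to j}$ over all $R\succ j$ gives $\prod_{R\succ j} M_{j\to R} \propto f_j^{C_j}\prod_{S\succ j} M_{S\to j}^{C_j-1}$, since each $S$ appears in the product for every $R\neq S$. Raising this to the power $1/(C_j-1)$ and substituting into (\ref{equation:GBPvariable}) yields $b_j \propto f_j\prod_{S\succ j} M_{S\to j}$, exactly the proposition's formula for $b_j$. The marginal-consistency check is then a one-liner: summing out $x_R\setminus x_j$ in $b_R \propto \tilde{f}_R\prod_{k\prec\partial R} M_{k\to R}$ gives $M_{j\to R}(x_j)\cdot M_{R\to j}(x_j)$ by the definition of $M_{R\to j}$, and applying the fixed point for $M_{j\to R}$ once more rewrites this as $f_j\prod_{S\succ j} M_{S\to j} \propto b_j$.

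The proof is really a verification—the conceptual work was done in deriving (\ref{equation:GBPfactor}), (\ref{equation:GBPvariable}), and (\ref{equation:GBPinter}) from first-order optimality. The only step that requires any care is tracking proportionality constants so that all normalizations $\sum_{x_R} b_R = 1$, $\sum_{x_j} b_j = 1$, and the boundary consistencies $\sum_{x_R\setminus x_j} b_R = b_j$ hold with a single coherent choice of multipliers; but because $\lambda_R$ and $\lambda_j$ are free to absorb any scalar factor and the identity $M_{j\to R}M_{R\to j} \propto b_j$ is automatic from the fixed-point relations, no obstruction arises.
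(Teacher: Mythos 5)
Your proposal is correct and follows essentially the same route as the paper, which proves this proposition implicitly through the derivation immediately preceding it: identifying $e^{-\lambda_{jR}(x_j)/kT}$ with $M_{j\to R}(x_j)$ at boundary variables and with $f_j(x_j)$ at interior ones, recovering (\ref{equation:GBPfactor}) and (\ref{equation:GBPvariable}) via the augmented factor $\tilde{f}_R$, and checking marginal consistency through the identity $\sum_{x_R\setminus x_j} b_R \propto M_{j\to R}\cdot M_{R\to j}$. Your explicit computation $\prod_{R\succ j} M_{j\to R} \propto f_j^{C_j}\prod_{S\succ j}M_{S\to j}^{C_j-1}$ is a clean way to make the exponent bookkeeping in (\ref{equation:GBPvariable}) transparent, and the normalization remark at the end correctly upgrades the proportionalities to the claimed equalities.
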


Note that this proposition yields a generalized belief propagation algorithm. However this algorithm becomes impractical as the size of the regions increases. With large regions, there is potentially far fewer variables $x_j$ for which we need to compute messages. However the marginalization over $x_R\setminus x_j$ becomes intractable.

\section{Modified free energy of a single region}

In the previous section we developed a regional belief propagation algorithm that rapidly becomes inefficient as the size of regions get large. However, if we assume we have access to a device that samples from the Boltzmann distribution, can this capability aid us in computing---or at least in approximating---these marginals?

Given a regional decomposition as in the last section, the Helmholtz free energy of the factors of a single region $R$ is given by
$$A_R(b_R) = \sum_{x_R} b_R(x_R) \left(E_R(x_R) + \sum_{j\prec R} E_j(x_j)\right) + kT \sum_{x_R} b_R(x_R) \log b_R(x_R).$$
Unless $R$ is isolated from the rest of the system one would not expect that minimizing $A_R$ would produce a result related to our desired posterior distribution. Nonetheless, we argue as follows. At the minimum of the free energy, the outgoing message $M_{j\to S}(x_j) \propto e^{-kT\lambda_{jS}(x_j)}$ represents the flow out of variable $j$ into region $S$. The (change in) free energy associated to this is 
$$A_j^{S}(b_j) = \sum_{x_j} b_j(x_j) \lambda_{jS}(x_j) + kT b_j(x_j)\log b_j(x_j).$$
At each variable in $\partial R$ we compensate for this expected flow. This produces the \emph{modified free energy} of the region $R$, defined as
$$A_R(b_R) - \sum_{j\in\partial R}\left( \sum_{S\succ j\::\:S\not= R} A_j^{S}(b_j)\right).$$
As with the Bethe approximation, we add appropriate marginalization constraints, to obtain the \emph{constrained modified free energy}
\begin{eqnarray*}
\tilde{A}_R &=& \sum_{x_R} b_R(x_R) \tilde{E}_R(x_R) + kT \sum_{x_R} b_R(x_R) \log b_R(x_R)\\
&&\quad +\ \sum_{j\prec \partial R}\sum_{x_j} b_j(x_j) V^R_j(x_j) - kT(C_j-1) \sum_{x_j} b_j(x_j) \log b_j(x_j)\\
&&\quad +\ \lambda_R\left(\sum_{x_R} b_R(x_R) - 1 \right) + \sum_{j\prec \partial R} \lambda^R_j \left(\sum_{x_j} b_j(x_j) - 1\right)\\
&&\quad +\ \sum_{j\prec \partial R} \sum_{x_j}\lambda_{jR}(x_j) \left( \sum_{x_R\setminus x_j} b_R(x_R) - b_j(x_j)\right),
\end{eqnarray*}
where
$$\tilde{E}_R(x_R) = E_R(x_R) + \sum_{j \prec R\setminus\partial R} E_j(x_j)$$
and for $j\in\partial R$,
$$V_j^R(x_j) = E_j(x_j) - \sum_{S \succ j \::\: S\not= R} \lambda_{jS}(x_j).$$
Note the the definition of $V_j^R$ involves the Lagrange multipliers of the constrained modified free energy of adjacent regions.

Adding these local ``corrective'' potentials $V_j^R$ could be viewed as a form of hybrid mean field approach \cite{riegler2013merging}, however the values of the potential are unknown. Nonetheless, like the cavity method \cite{mooij2007loop}, if the corrective potentials are set as indicated above then the minima of each regional free energy do minimize the global approximate free energy.

\begin{theorem}\label{theorem:reconstitute}
A joint interior critical point of each modified regional free energy defines a critical point of the regional approximate free energy.
\end{theorem}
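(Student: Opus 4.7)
My plan is to verify the stationarity equations of $\tilde{A}_\text{regional}$ term by term, drawing on the stationarity of each $\tilde{A}_R$ together with an explicit dictionary between their variables. The region beliefs $b_R$, the normalization multipliers $\lambda_R$, and the boundary data $\{b_j,\lambda_{jR}(x_j):j\in\partial R\}$ carry over verbatim. For an interior variable $j\in R^\circ$ (so $C_j=1$), I define $b_j(x_j) := \sum_{x_R\setminus x_j}b_R(x_R)$ and $e^{-\lambda_{jR}(x_j)/kT} := f_j(x_j)$ up to normalization; the second assignment is exactly the relation forced at every interior critical point of $\tilde{A}_\text{regional}$ by the $C_j=1$ analysis of Section~3. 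The normalization multiplier $\lambda_j$ of $\tilde{A}_\text{regional}$ can be identified with $\lambda_j^R$ for any one $R\succ j$, the remaining choices agreeing automatically by the normalization of $b_j$.

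Next I would match the stationarity conditions family by family. The derivative of $\tilde{A}_R$ in $b_R(x_R)$ yields $b_R \propto e^{-\tilde{E}_R(x_R)/kT}\prod_{j\in\partial R}e^{-\lambda_{jR}(x_j)/kT}$; since $\tilde{E}_R = E_R + \sum_{j\in R^\circ}E_j$ and the interior $\lambda_{jR}$'s were chosen so that $e^{-\lambda_{jR}/kT} \propto f_j$, the interior prior factors fold back out and this collapses to $b_R \propto f_R\prod_{j\prec R}e^{-\lambda_{jR}/kT}$, which is precisely (\ref{equation:GBPfactor}). The marginalization constraints of $\tilde{A}_\text{regional}$ hold directly at boundary variables and by construction at interior ones, while at an interior variable the $b_j$-derivative of $\tilde{A}_\text{regional}$ (where the entropy coefficient $1-C_j$ vanishes) reduces to $E_j + \lambda_j - \lambda_{jR} = 0$ and is satisfied by our choice of $\lambda_{jR}$.

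The crux is the boundary-variable derivative. Setting $\partial \tilde{A}_R/\partial b_j(x_j)=0$ gives $kT(C_j-1)(\log b_j(x_j)+1) = V_j^R(x_j)-\lambda_{jR}(x_j)+\lambda_j^R$, and the key algebraic identity
$$V_j^R(x_j) - \lambda_{jR}(x_j) = E_j(x_j) - \sum_{S\succ j\::\:S\not=R}\lambda_{jS}(x_j) - \lambda_{jR}(x_j) = E_j(x_j) - \sum_{S\succ j}\lambda_{jS}(x_j)$$
is manifestly independent of $R$. The resulting equation is identical to the $b_j$-stationarity condition of $\tilde{A}_\text{regional}$, namely (\ref{equation:GBPvariable}), and jointness of the critical point is exactly the statement that this one $R$-free relation holds simultaneously for every $R\succ j$.

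The hard part, and the only delicate aspect of the argument, is keeping the multipliers straight: the $\lambda_{jS}$'s hidden inside $V_j^R$ are borrowed from \emph{neighbouring} regions' modified free energies, so a single family $\{\lambda_{jR}(x_j)\}_{R\succ j}$ must simultaneously serve as the marginalization multiplier of its ``home'' $\tilde{A}_R$ and as a corrective-potential coefficient in every $\tilde{A}_S$ with $S\succ j$, $S\not=R$. Once this sharing is acknowledged, the telescoping visible in the identity above is what fuses the independently-posed regional stationarity conditions into one coherent critical point of $\tilde{A}_\text{regional}$.
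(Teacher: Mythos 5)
Your proposal is correct and follows essentially the same route as the paper's proof: write down the stationarity conditions of each constrained modified regional free energy, substitute the definition of $V_j^R(x_j)$ so that $V_j^R(x_j)-\lambda_{jR}(x_j)=E_j(x_j)-\sum_{S\succ j}\lambda_{jS}(x_j)$ becomes $R$-independent, and recognize the resulting relations as exactly (\ref{equation:GBPfactor}) and (\ref{equation:GBPvariable}). Your treatment is somewhat more explicit than the paper's about the interior-variable multipliers ($e^{-\lambda_{jR}/kT}\propto f_j$ folding $\tilde{f}_R$ back into $f_R\prod_{j\prec R}e^{-\lambda_{jR}/kT}$) and about the sharing of the multiplier family across neighbouring regions, but these are bookkeeping points the paper leaves implicit rather than a different argument.
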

\begin{proof}
Just as above we compute an interior critical point of each region's constrained modified free energy, yielding
\begin{eqnarray*}
\tilde{E}_R(x_R) + kT(\log b_R(x_R) + 1) + \lambda_R + \sum_{j\prec \partial R} \lambda_{jR}(x_j) &=& 0,\\
V_j^R(x_j) - (C_j-1)kT(\log b_j(x_j)  + 1) + \lambda^R_j - \lambda_{jR}(x_j) &=& 0.
\end{eqnarray*}
Now using $V_j^R(x_j) = E_j(x_j) - \sum_{S \succ j \::\: S\not= R} \lambda_{jS}(x_j)$ these become
\begin{eqnarray*}
b_R(x_R) &\propto& \tilde{f}_R(x_R)\cdot \prod_{j\prec \partial R} e^{-\lambda_{jR}(x_j)/kT}\\
b_j(x_j) &\propto & f(x_j)^{1/(1-C_j)}\cdot \prod_{S\succ j} e^{-\lambda_{jS}(x_r)/kT(C_j-1)}.
\end{eqnarray*}
These are precisely the relations of a critical point of the regional approximate free energy.
\end{proof}

Finally we can tackle the question of how to utilize a modest sized device or method that can produce Boltzmann samples. \emph{Suppose that}: given an arbitrary selection of corrective potentials $\{V_j^R(x_j)\}_{j \prec \partial R}$, we have a black box that produces an interior minimum $b_R(x_R)$ of each region's modified regional free energy. \emph{Then}: we instantiate Algorithm \ref{algorithm} below. Note that in this algorithm it is only the marginals of the Boltzmann distribution $b_R(x_R)$ that are required. With the ability to do Boltzmann sampling from physical hardware (or by other means), we approximate these marginals in each region by drawing a sufficiently large number of samples and estimating the marginal probabilities empirically. Specifically, we decompose our large re-estimation problem into regions small enough for our Boltzmann sampler to handle, and use the sampler on each region to approximate the marginals of Algorithm \ref{algorithm}. By Theorem \ref{theorem:soundness} below, if the algorithm converges then it recovers the approximate posterior.

\begin{algorithm}[h]
  Initialize each $F_{j\to R}(x_j)$ appropriately (e.g. to a uniform distribution) for each $j,R$ with $j \prec \partial R$\;
  \Repeat{converged or timed out}{
    compute the potentials $V_j^R(x_j) = E_j(x_j) - \sum_{S\succ j\::\: S\not= R} kT \log F_{j\to S}(x_j)$\;
    with these potentials, use the black box to obtain free energy minima $b_R(x_R)$\;
    compute the messages $F_{R\to j}(x_j) \propto F_{j\to R}(x_j)^{-1}\cdot \sum_{x_R\setminus x_j} b_R(x_R)$\;
    re-estimate $F_{j\to R}(x_j) \propto f_j(x_j)\cdot \prod_{S\succ j\::\: S\not= R} F_{S\to j}(x_j)\;$
  }
  \caption{The domain decomposition algorithm}\label{algorithm}
\end{algorithm}

The numerical results presented in \cite{bian2014discrete, bian2016mapping} indicate that these approximations can be sufficient to solve problems too large to be sampled directly. In \cite{bian2014discrete} we used a variant of Algorithm \ref{algorithm} based on the min-sum method to solve 1000 variable LDPC decoding problems on a 504 qubit D-Wave quantum annealer for which belief propagation failed to converge. In \cite{bian2016mapping} we implemented a slight variant of Algorithm \ref{algorithm} to solve hardware fault diagnosis problems requiring up to seven regions to completely embed on a D-Wave 2X architecture.

\begin{lemma}
If Algorithm \ref{algorithm} converges then the marginal at any boundary variable $b_j(x_j) = \sum_{x_R\setminus x_j} b_R(x_R)$ is independent of the region $R$ used to compute it.
\end{lemma}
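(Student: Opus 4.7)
The plan is to use the two fixed-point equations built into the loop body of Algorithm \ref{algorithm} (lines 5 and 6) and observe that together they express the marginal in a form that is manifestly symmetric in the regions surrounding $j$.

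First I would invoke the convergence hypothesis to replace the two message-update rules with equalities at the fixed point: for every boundary pair $(j,R)$,
\begin{eqnarray*}
F_{R\to j}(x_j) &\propto& F_{j\to R}(x_j)^{-1}\cdot \sum_{x_R\setminus x_j} b_R(x_R),\\
F_{j\to R}(x_j) &\propto& f_j(x_j)\cdot \prod_{S\succ j\::\:S\not= R} F_{S\to j}(x_j).
\end{eqnarray*}
Rearranging the first line gives $\sum_{x_R\setminus x_j} b_R(x_R) \propto F_{j\to R}(x_j)\cdot F_{R\to j}(x_j)$. Substituting the second line for $F_{j\to R}$ produces
$$\sum_{x_R\setminus x_j} b_R(x_R) \;\propto\; f_j(x_j)\cdot F_{R\to j}(x_j) \cdot \prod_{S\succ j\::\:S\not= R} F_{S\to j}(x_j) \;=\; f_j(x_j) \cdot \prod_{S\succ j} F_{S\to j}(x_j),$$
and the right-hand side is patently independent of the distinguished region $R$.

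Finally I would close the argument by promoting the proportionality to an equality. Because $b_R(x_R)$ is a probability distribution, $\sum_{x_j}\sum_{x_R\setminus x_j}b_R(x_R)=1$, so the left-hand side is already a normalized distribution on $\Omega_j$; the proportionality constant is therefore uniquely determined and equal on both sides. Hence $\sum_{x_R\setminus x_j} b_R(x_R)$ is a function of $x_j$ depending only on $j$ and on the messages $\{F_{S\to j}\}_{S\succ j}$, not on the choice of $R\succ j$.

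There is no real obstacle here; the only subtlety is bookkeeping around normalization, and it is resolved by the observation that each $\sum_{x_R\setminus x_j}b_R(x_R)$ is automatically normalized when summed over $x_j$. The lemma then sets the stage for defining a single consistent boundary belief $b_j$, which is what Theorem \ref{theorem:soundness} will need in order to identify the stationary points of Algorithm \ref{algorithm} with the critical points furnished by Theorem \ref{theorem:reconstitute}.
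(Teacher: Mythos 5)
Your argument is exactly the paper's proof: at the fixed point, steps 5 and 6 give $\sum_{x_R\setminus x_j} b_R(x_R) \propto F_{j\to R}(x_j)\cdot F_{R\to j}(x_j) \propto f_j(x_j)\cdot \prod_{S\succ j} F_{S\to j}(x_j)$, which is manifestly independent of $R$. Your added remark that normalization pins down the constant of proportionality is a fine (and correct) bit of extra bookkeeping, but otherwise the two proofs coincide.
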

\begin{proof}
For any boundary variable $x_j$ and region $R \succ j$, if the algorithm has converged then steps 5 and 6 of the algorithm show
$$\sum_{x_R\setminus x_j} b_R(x_R) \propto F_{j\to R}(x_j)\cdot F_{R\to j}(x_j) \propto f_j(x_j)\cdot \prod_{S\succ j} F_{S\to j}(x_j),$$
which is independent of $R$.
\end{proof}

\begin{theorem}\label{theorem:soundness}
If Algorithm \ref{algorithm} converges then the collection of minima of each region's modified regional free energy $b_R(x_R)$ from step 4, and their marginals $b_j(x_j) = \sum_{x_R\setminus x_j} b_R(x_R)$, produces a critical point of the approximate regional free energy of the whole system.
\end{theorem}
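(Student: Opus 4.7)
The plan is to reduce Theorem \ref{theorem:soundness} to Proposition \ref{proposition:regional-BP} via the identification, at convergence, $M_{j\to R}:=F_{j\to R}$ and $M_{R\to j}:=F_{R\to j}$. Once the two message-passing relations of Proposition \ref{proposition:regional-BP} are checked at a fixed point of Algorithm \ref{algorithm}, that proposition immediately delivers the conclusion that the $b_R$ together with their boundary marginals $b_j$ are critical points of the regional approximate free energy.

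The variable-to-region relation $M_{j\to R}\propto f_j\prod_{S\neq R}M_{S\to j}$ is literally step 6 of the algorithm, so it holds at any fixed point. The harder relation is the region-to-variable one, $M_{R\to j}(x_j)\propto\sum_{x_R\setminus x_j}\tilde{f}_R(x_R)\prod_{r\prec\partial R\setminus j}M_{r\to R}(x_r)$. To verify it, I would first establish the intermediate claim that the $b_R$ returned by step 4 coincides, at convergence, with $\tilde{f}_R\prod_{j\prec\partial R}F_{j\to R}/Z_R$. This is done by checking that this distribution is an interior critical point of the constrained modified regional free energy under the identification $\lambda_{jR}(x_j)=-kT\log F_{j\to R}(x_j)$: the $b_R$-derivative condition from the proof of Theorem \ref{theorem:reconstitute} then holds by inspection, while the $b_j$-derivative condition, after substituting $V_j^R$ from step 3 and computing the marginal as $b_j\propto f_j\prod_{S\succ j}F_{S\to j}$ via the preceding Lemma, reduces to an algebraic identity which is forced by step 6 at the fixed point. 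With the intermediate claim in hand, step 5 gives $F_{R\to j}\propto F_{j\to R}^{-1}\sum_{x_R\setminus x_j}b_R=\sum_{x_R\setminus x_j}\tilde{f}_R\prod_{r\neq j}F_{r\to R}$, which is precisely the needed relation.

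The main obstacle is the algebraic bookkeeping in the intermediate claim: one has to iterate step 6 across multiple regions and use the Lemma to collapse products involving both $F_{j\to R}$ and $F_{S\to j}$ into a consistent form, with all proportionality constants tracked carefully. Variables interior to a single region ($C_j=1$) are handled separately by the branch of Section 4 where $e^{-\lambda_{jR}/kT}\propto f_j$ follows directly, avoiding any message bookkeeping.
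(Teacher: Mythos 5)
Your overall architecture matches the paper's: reduce to Proposition \ref{proposition:regional-BP} by showing that at a fixed point the $F$-messages of Algorithm \ref{algorithm} satisfy its two hypotheses, with the variable-to-region relation being step 6 verbatim and the region-to-variable relation following from step 5 once one knows $b_R \propto \tilde{f}_R \prod_{j\prec\partial R} F_{j\to R}$. The gap is in how you establish that last identity. You propose to \emph{verify} that the candidate $\tilde{f}_R\prod_{j\prec\partial R}F_{j\to R}/Z_R$ is an interior critical point of the constrained modified free energy and then conclude that it ``coincides'' with what the black box returns in step 4. That inference requires the interior critical point to be unique, which is not guaranteed: the modified free energy carries the concave correction $-kT(C_j-1)\sum_{x_j}b_j\log b_j$ and, like the Bethe free energy, may admit several critical points, so exhibiting one of the prescribed form does not show the black box returned that one. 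Relatedly, your check of the $b_j$-derivative condition invokes the Lemma to supply the marginal $b_j \propto f_j\prod_{S\succ j}F_{S\to j}$; but the Lemma computes the marginal of the \emph{actual} black-box output, not of your candidate, so using it there presupposes the very identification you are trying to establish.

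The repair is to run the argument in the opposite direction, which is what the paper does. Whatever $b_R$ the black box returns is, by hypothesis, an interior critical point, so it satisfies $b_R \propto \tilde{f}_R\prod_{j\prec\partial R}e^{-\lambda_{jR}/kT}$ and $b_j^{(R)} \propto e^{V_j^R/kT(C_j-1)}\,e^{-\lambda_{jR}/kT(C_j-1)}$ for \emph{some} multipliers $\lambda_{jR}$. Eliminating the multipliers gives $e^{-\lambda_{jR}(x_j)/kT} \propto b_j^{(R)}(x_j)^{C_j-1}\,e^{-V_j^R(x_j)/kT}$, and at a fixed point both factors on the right are pinned down by the algorithm: the Lemma together with step 6 gives $b_j(x_j)^{C_j-1}\propto f_j(x_j)^{-1}\prod_{S\succ j}F_{j\to S}(x_j)$, and step 3 determines $e^{-V_j^R/kT}$ in terms of $f_j$ and the $F_{j\to S}$ with $S\neq R$; multiplying these forces $e^{-\lambda_{jR}(x_j)/kT}\propto F_{j\to R}(x_j)$ with no appeal to uniqueness. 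From that point on your argument is exactly right: step 5 then yields $F_{R\to j}\propto F_{j\to R}^{-1}\sum_{x_R\setminus x_j}b_R \propto \sum_{x_R\setminus x_j}\tilde{f}_R\prod_{r\prec\partial R\setminus j}F_{r\to R}$, which is the region-to-variable relation of Proposition \ref{proposition:regional-BP}, and the conclusion follows.
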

\begin{proof}
For an arbitrary selection of corrective potentials $V_j^R(x_j)$, the critical point of the constrained modified free energy satisfies
\begin{eqnarray*}
b_R(x_R) &\propto& \tilde{f}_R(x_R)\cdot \prod_{j\prec \partial R} e^{-\lambda_{jR}(x_j)/kT}\\
b_j^{(R)}(x_j) &\propto & e^{V_j^R(x_j)/kT(C_j-1)} \cdot e^{-\lambda_{jR}(x_j)/kT(C_j-1)}.
\end{eqnarray*}
Therefore at a critical point,
$$b_R(x_R) \propto \tilde{f}_R(x_R)\cdot \prod_{j\prec \partial R} b^{(R)}_j(x_j)^{(C_j-1)} e^{-V_j^R(x_j)/kT}.$$
Define
\begin{eqnarray*}
M_{j\to R}(x_j) &\propto& b^{(R)}_j(x_j)^{(C_j-1)} e^{-V_j^R(x_j)/kT}\\
M_{R \to j}(x_j) &\propto& \sum_{x_R\setminus x_j} \tilde{f}_R(x_R)\cdot \prod_{k\prec \partial R\setminus j} M_{k\to R}(x_k).
\end{eqnarray*}
Note then than
$$b_j^{(R)}(x_j) = \sum_{x_R\setminus x_j} b_R(x_R) \propto M_{j\to R}(x_j) \cdot M_{R\to j}(x_j).$$

Now suppose that we are at a stationary point of the algorithm. Then from the lemma,
$$b_j(x_j)^{(C_j-1)} \propto \prod_{S\succ j\::\:S \not= R} F_{j\to S}(x_j)\cdot F_{S\to j}(x_j) \propto (f_j(x_j))^{-1}\cdot \prod_{S\succ j} F_{j\to S}(x_j).$$
Also,
$$e^{-V_j^R(x_j)/kT} = f_j(x_j) \cdot \prod_{S\succ j\::\: S \not=R} (F_{j\to S}(x_j))^{-1}.$$
Thus,
$$M_{j\to R}(x_j) \propto b^{(R)}_j(x_j)^{(C_j-1)} e^{-V_j^R(x_j)/kT} \propto F_{j\to R}(x_j).$$
But since both these are probabilities, this proportionality is in fact an equality. Also by the lemma,
$$F_{j\to R}(x_j)\cdot F_{R\to j}(x_j) \propto b_j(x_j) \propto M_{j\to R}(x_j) \cdot M_{R\to j}(x_j),$$
and so
$$M_{R\to j}(x_j) = F_{R\to j}(x_j).$$
Therefore, we must have
$$M_{j\to R}(x_j) \propto f_j(x_j) \cdot \prod_{S \succ j\::\:S\not=R} M_{S\to j}(x_j)$$
since this relation is satisfied by the $F$-messages. The result then follows from Proposition \ref{proposition:regional-BP}.
\end{proof}

\end{document}